\documentclass[11pt]{article}

\usepackage[a4paper,margin=1in]{geometry}
\usepackage[T1]{fontenc}
\usepackage[utf8]{inputenc}
\usepackage{lmodern}
\usepackage{microtype}
\usepackage{amsmath,amssymb,amsthm,mathtools}
\usepackage{booktabs,array}
\usepackage{float}
\usepackage{enumitem}
\usepackage{xcolor}
\usepackage{natbib}
\usepackage[hidelinks]{hyperref}
\hypersetup{
  pdftitle={Opinions Before Evidence: Dynamic Information Quality and Source Familiarity},
  pdfauthor={Georgy Lukyanov and Nikita Ogorodnikov},
  pdfkeywords={information provision, forecasting services, source learning,
    evidence quality, dynamic pricing, calibration audits}
}

\newtheorem{proposition}{Proposition}
\newtheorem{lemma}{Lemma}
\newtheorem{corollary}{Corollary}

\numberwithin{equation}{section}

\newcommand{\R}{\mathbb{R}}
\newcommand{\Vhat}{\widehat V}

\setlist[itemize]{leftmargin=1.5em,itemsep=0.25em,topsep=0.35em}
\setlist[enumerate]{leftmargin=1.7em,itemsep=0.25em,topsep=0.35em}

\title{\textbf{Opinions Before Evidence}\\[0.25em]
\large Dynamic Information Quality and Source Familiarity}
\author{%
Georgy Lukyanov\thanks{Corresponding author. Toulouse School of Economics.
Email: \href{mailto:georgy.lukyanov@tse-fr.eu}{georgy.lukyanov@tse-fr.eu}.}
\and
Nikita Ogorodnikov\thanks{HSE University, International College of Economics
and Finance.}}
\date{15 August 2026}

\begin{document}
\maketitle

\begin{abstract}
New information services begin without a record that teaches customers how to interpret them. We study a two-period monopoly selling standardized forecasts whose committed scoring technology combines noisy evidence with a persistent, initially unknown calibration. More evidence improves a forecast's current decision value but makes the source's calibration harder to learn. The first forecast is released publicly only after its immediate use expires, so it builds a common decoder for the next forecast. For an arbitrary stakes distribution, monopoly pricing uses the same static cutoff in both periods. Provider patience lowers initial evidence intensity; with identical technologies and a binary format menu, transparent conditions generate an endogenous low-evidence first forecast followed by a high-evidence second forecast. A constrained planner switches later because monopoly undercaptures the return to costly evidence. A calibration audit eliminates the accuracy--interpretability trade-off, but the provider may not adopt it even when adoption is socially valuable. The mechanism requires neither distorted reporting nor confirmation preferences: the strategic choice is the auditable evidence input to a mechanically generated score.
\end{abstract}

\noindent\textbf{Keywords:} information provision; forecasting services; source learning; evidence quality; dynamic pricing; calibration audits.

\noindent\textbf{JEL classification:} D82; D83; L12; L15.

\section{Introduction}\label{sec:intro}

The first output of an unfamiliar information service has two jobs. It helps a client
make a current decision, and it begins to teach the client how to read the service.
A forecast, rating, analyst score, or expert recommendation reflects both the evidence
available for the present problem and a stable feature of the source: its calibration,
methodological prior, or characteristic way of aggregating evidence. A record of
outputs makes that source easier to interpret. A new source has no such record.

This observation creates a product-design tension. An evidence-intensive forecast is
more informative about today's state. Yet because it closely follows current evidence,
it reveals relatively little about the source's persistent calibration. An integrated
judgment based on weaker evidence is less useful today but more diagnostic of how the
source interprets information. Current accuracy and the production of a future decoder
can therefore be substitutes.

We study this trade-off in a two-period market for standardized forecasts. A monopoly
operates a committed scoring technology with a persistent calibration component that
neither the firm nor its customers can separately observe. The firm chooses an
auditable evidence input and a price. The technology then mechanically combines the
evidence with its embedded calibration and produces one standardized score. The score
is sold while its associated decision is live and enters a public archive only after
that decision expires. The state itself is not revealed before the next sale. Hence
the first score no longer has current use value when it becomes public, but it teaches
all prospective customers how to interpret the source's second score.

The institutional timing is important. The firm does not know a scalar calibration
that it could announce through cheap talk or signal through its price. Nor can it
alter the score after seeing the evidence. A verified decoder instead requires an
external calibration exercise based on benchmark cases or privileged outcome data.
This distinction fits a new forecasting or scoring service whose evidentiary protocol
can be audited even though the stable behavior of its sealed model or expert team must
be estimated from outputs. It does not fit an unrestricted media outlet that knows its
own bias and can costlessly send auxiliary messages. We state this boundary rather
than selecting an opaque equilibrium from a disclosure game.

Our first result derives the information technology. Higher evidence precision
strictly raises the current value of a score. At the same time, it lowers the weight
placed on the source's calibration and strictly slows learning about that calibration.
This accuracy--interpretability trade-off is a specialization of the unknown-
perspectives logic in \citet{sethiyildiz2016}. We take that epistemic mechanism as
given and ask how a paid information provider chooses the rate at which customers
learn its perspective.

Our second result separates pricing from quality. Customers have heterogeneous stakes
distributed according to an arbitrary continuous CDF. Because the first score becomes
public only after its current value expires, buying it does not determine access to the
future decoder. At each date the monopoly therefore uses the static monopoly cutoff.
The subscriber share is unchanged across dates, while the initial and second-period prices
move with the corresponding information values. An appendix shows a stronger result:
even if only first-period subscribers learn the decoder, the same cutoff remains
optimal and there is no penetration pricing. Thus the absence of introductory market
expansion is not an artifact of uniform demand.

Our third result makes the sequence endogenous. The provider chooses evidence in both
periods. Greater patience lowers the first-period choice because low evidence produces
a more informative source history. With identical feasible sets and costs across
dates, increasing differences between familiarity and evidence give a continuous-
choice region in which the first forecast uses strictly less evidence than the second.
For a binary menu the condition is especially transparent, and the numerical example
generates a low-evidence first score and a high-evidence second score without assuming
that the later product is intrinsically more evidence intensive.

Our fourth result concerns policy. A planner who regulates evidence quality but leaves
monopoly access in place values the information received by inframarginal customers,
whereas the provider captures monopoly revenue. When high evidence is more costly, the
provider moves to low-evidence onboarding at a lower patience threshold than the
planner. A minimum evidentiary-input standard is consequently useful only for an
intermediate range of patience. At higher patience, the same standard destroys useful
investment in source interpretability.

A calibration audit targets the friction more directly. A perfect audit reveals the
decoder, removes the learning return to low evidence, and can induce the high-evidence
format. We endogenize adoption at real cost $K$. The provider's adoption ceiling is
strictly below the constrained social ceiling because it captures only monopoly
revenue rather than total subscriber value. This produces a sharp region in which a
subsidy or mandate for calibration is justified. Unlike a generic claim that
certification is a Blackwell improvement, the result explains why a profitable
information provider may fail to purchase a socially valuable decoder.

The paper connects three literatures. First, \citet{sethiyildiz2016} show how opinions
reveal unknown perspectives and make familiar sources easier to understand. Our
contribution is supply-side: a seller chooses current evidence knowing that it changes
the public's future decoder. Second, dynamic media models show that commercial sources
may suppress information to preserve future consultation or reputation
\citep{gentzkowshapiro2006,kawamuralequement2023,panova2026}. Our reports are not
strategically distorted. Lower current information arises because the firm chooses
the input to a committed aggregation rule, and the future payoff comes from learning a
stable calibration rather than preserving uncertainty about the state or appearing
competent.

Third, information suppliers may choose accuracy, clarity, and differentiation
\citep{galpertitrevino2020,jannschottmuller2026}. In those papers clarity is a
contemporaneous communication property or lower accuracy differentiates competing
outlets. Here interpretability is an intertemporal stock created by observing one
source, and the force arises under monopoly. The experience-good literature studies
pricing when use reveals product quality or match value
\citep{shapiro1983,goering1986,bergemannvalimaki2006,bonatti2011}; by contrast, use
here reveals how to decode the producer, and the producer chooses current evidence to
affect that learning. \citet{campbell2015} studies information pricing when use creates
a decaying public understanding of an information source. Our public history instead
creates a source-specific decoder, and its informational content is chosen by the
seller.

Section~\ref{sec:model} presents the institution and timing.
Section~\ref{sec:information} derives the information trade-off.
Section~\ref{sec:pricing} solves pricing.
Section~\ref{sec:quality} establishes endogenous sequencing.
Section~\ref{sec:policy} studies evidence standards and calibration audits.
Section~\ref{sec:discussion} discusses predictions and scope. Proofs and robustness
results are in the Appendix.

\section{A market for standardized forecasts}\label{sec:model}

\subsection{Customers and decisions}

There are two dates, $t\in\{1,2\}$, a monopolistic forecasting service, and a unit
mass of customers. At each date the payoff-relevant state is independently drawn,
\begin{equation}\label{eq:state}
\theta_t\sim N(0,1).
\end{equation}
Customer $i$ chooses an action $a_{it}\in\R$ and receives gross payoff
\begin{equation}\label{eq:loss}
-\gamma_i(a_{it}-\theta_t)^2.
\end{equation}
The stakes parameter $\gamma_i$ has continuous CDF $F$ on
$[0,\bar\gamma]$, with survival function $S(\gamma)=1-F(\gamma)$. Types remain
fixed across dates. The value of a forecast is therefore $\gamma_i$ times the
reduction in posterior variance of the state.

For a product with unit-stakes value $z$, write
\begin{equation}\label{eq:static-M}
M:=\max_{c\in[0,\bar\gamma]}cS(c),
\end{equation}
and assume that the maximizing cutoff $c^*$ is unique and interior. The uniform
benchmark $F(\gamma)=\gamma$ on $[0,1]$ gives $c^*=1/2$ and $M=1/4$.

\subsection{A committed forecasting technology}

The service operates a stable forecasting engine. Its calibration component $b$ is
drawn once,
\begin{equation}\label{eq:b-prior}
b\sim N(\mu,\eta^{-1}),\qquad \eta>0,
\end{equation}
and persists across the two states. It is an embedded fixed effect of a sealed model,
rating protocol, or expert team. Neither the commercial firm nor customers observe
$b$ as a separately communicable scalar. They share the prior in
\eqref{eq:b-prior} and infer the calibration statistically from scores.

At date $t$, the firm chooses evidence precision $q_t$ from a compact interval
$Q=[\underline q,\overline q]\subset(0,\infty)$. The engine receives
\begin{equation}\label{eq:signal}
s_t=\theta_t+\varepsilon_t,
\qquad
\varepsilon_t\sim N(0,q_t^{-1}),
\end{equation}
with independent errors, and mechanically produces the standardized score
\begin{equation}\label{eq:score}
m_t=\frac{b+q_ts_t}{1+q_t}.
\end{equation}
The mapping is a committed production technology, not a message chosen after
$(b,s_t)$ is observed. Evidence precision is publicly auditable: it represents an
input such as sample size, validation protocol, research budget, or forecast-resolution
requirement. Producing a score of precision $q$ costs $C(q)\ge0$, where $C$ is
continuous;
additional regularity is imposed only where needed. Distribution has zero marginal
cost.

The source can be understood as a Bayesian aggregation rule whose embedded prior mean
is $b$, but no strategic actor can announce that mean. An attestation can verify that
the fixed engine and the chosen evidence protocol generated the contracted score
without converting its latent calibration into a public decoder. Section~\ref{sec:audit}
introduces a separate calibration technology that does exactly that at real cost.

\subsection{Timing and public history}\label{sec:timing}

The timing is as follows.
\begin{enumerate}
\item The firm publicly chooses $q_1$ and price $p_1$. Customers decide whether to buy
before $(\theta_1,s_1,m_1)$ are realized. Buyers observe $m_1$ and choose their
posterior-mean actions.
\item After the first decision has expired, $m_1$ enters a public archive. The state,
the engine's raw evidence, and any informative payoff feedback remain unobserved before
date 2. Thus the score has no remaining use for the first state but becomes public
evidence about the source's calibration.
\item Given the public history, the firm chooses $q_2$ and $p_2$. Customers decide
whether to buy the second score before $(\theta_2,s_2,m_2)$ are realized. The engine
produces $m_2$, buyers observe it, and act.
\end{enumerate}

The firm discounts date 2 by $\beta\in[0,1]$ and maximizes discounted subscription
revenue net of real production costs. At every strategic decision, the firm and
customers share the same public information. In particular, prices and quality cannot
signal a privately known $b$ or a privately observed first report. A subgame-perfect
equilibrium therefore specifies the firm's quality and price choices and customers'
purchase and action decisions at each public history.

The realization of $m_1$ changes the posterior mean of $b$ but not its posterior
precision. Quadratic decision value and optimized revenue are translation invariant in
that mean. Consequently, although date-2 choices are formally contingent on the public
score, their equilibrium levels depend on $q_1$ rather than on the realization of
$m_1$.

Delayed public release is natural for time-sensitive forecasts and scores: a forecast
can be sold before a deadline and archived afterward. It also makes leakage after the
current decision harmless. Appendix~\ref{app:private-history} studies the alternative
in which only first-period buyers observe the first score and shows that the pricing
and quality objective are unchanged.

The one-output restriction is institutional rather than semantic. A platform accepts
one standardized score for each live decision. Producing a verified decoder requires
benchmark outputs, labeled outcomes, and user attention beyond that commercial score.
We summarize those resources by the audit cost in Section~\ref{sec:audit}. If the firm
instead knew $b$ and could costlessly send unrestricted supplementary messages, a
fully revealing disclosure equilibrium could eliminate source learning; that setting
is outside the model.

\section{Accuracy and source learning}\label{sec:information}

The score in \eqref{eq:score} is an invertible transformation of
\begin{equation}\label{eq:rescaled-score}
y_t:=\frac{1+q_t}{q_t}m_t
=\theta_t+\varepsilon_t+\frac{b}{q_t}.
\end{equation}
Customers know the posterior distribution of $b$ but not its realization. A score is
therefore a signal of the current state contaminated by both evidence noise and
calibration uncertainty.

\begin{lemma}[Accuracy--interpretability trade-off]\label{lem:information}
Suppose the public's precision about $b$ before a score is $x>0$. The score's
unit-stakes value is
\begin{equation}\label{eq:V}
V(x,q)=\frac{q^2x}{1+qx+q^2x}.
\end{equation}
It is strictly increasing in evidence precision and source familiarity:
\begin{equation}\label{eq:V-derivatives}
V_q(x,q)
=\frac{xq(2+xq)}{[1+xq(1+q)]^2}>0,
\qquad
V_x(x,q)
=\frac{q^2}{[1+xq(1+q)]^2}>0.
\end{equation}
After the public observes a score but receives no additional information about its
state, precision about the source's calibration rises from $x$ to
\begin{equation}\label{eq:posterior-precision}
x+\lambda(q),
\qquad
\lambda(q)=\frac{1}{q(1+q)},
\qquad
\lambda'(q)<0.
\end{equation}
Thus more evidence makes the current score more valuable but produces less learning
about the source.
\end{lemma}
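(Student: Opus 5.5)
The proof works with the rescaled score $y=\theta+\varepsilon+b/q$ from \eqref{eq:rescaled-score}. It is an invertible transformation of $m$, so it carries the same information. Conditional on the public history, $b$ has mean $\hat\mu$ and variance $1/x$. Let $\tilde y := y-\hat\mu/q$. Then $\tilde y=\theta+\nu$, where $\nu=\varepsilon+(b-\hat\mu)/q$ is normal, independent of $\theta$, with mean zero and variance
\[
\sigma^2=\frac1q+\frac{1}{q^2x}=\frac{qx+1}{q^2x}.
\]
Because $\theta\sim N(0,1)$ and the loss is quadratic, the posterior-mean action leaves expected loss equal to the posterior variance $1/(1+\sigma^{-2})$. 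The unit-stakes value is therefore the variance reduction
\[
V=1-\frac{1}{1+\sigma^{-2}}=\frac{1}{1+\sigma^{2}}=\frac{q^2x}{q^2x+qx+1},
\]
which is \eqref{eq:V}. This step also confirms the remark in Section~\ref{sec:timing}: $V$ does not depend on $\hat\mu$, only on $x$.

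The derivatives follow from the quotient rule with denominator $D=1+xq(1+q)$.
\begin{itemize}
\item For $q$: $V_q=\bigl[2qx\,D-q^2x\cdot x(1+2q)\bigr]/D^2$. The numerator simplifies to $2qx+q^2x^2=xq(2+xq)$, which is positive.
\item For $x$: $V_x=\bigl[q^2 D-q^2x\cdot q(1+q)\bigr]/D^2=q^2/D^2$, which is positive.
\end{itemize}
These are routine computations and need only be shown compactly.

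For the learning part, the public observes $y$ but not $\theta$ or $s$. We can write
\[
q\,y=q\theta+q\varepsilon+b.
\]
As a signal about $b$, its noise $q(\theta+\varepsilon)$ is normal, independent of $b$, with variance $q^2(1+1/q)=q(1+q)$. Normal–normal updating adds the signal precision $1/[q(1+q)]=\lambda(q)$ to $x$, and this increment does not depend on the realization of the score. Finally, $q(1+q)$ is strictly increasing on $(0,\infty)$, so $\lambda'(q)=-(1+2q)/[q(1+q)]^2<0$.

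The only step needing care is the first. The calibration uncertainty must be treated as additional noise after centering at the posterior mean, using the independence of $b$, $\varepsilon$ and $\theta$. The rest is algebra.
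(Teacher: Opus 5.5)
Your proof is correct and follows essentially the same route as the paper: you center the rescaled score at the public mean of $b$ to obtain a Gaussian signal of $\theta$ with noise variance $1/q+1/(q^2x)$, convert its precision into a variance reduction, and read calibration learning off $(1+q)m=b+q\theta+q\varepsilon$, whose noise variance is $q^2+q$. The only difference is that you write out the quotient-rule algebra and note that $V$ does not depend on the posterior mean of $b$; the paper leaves the former as ``direct differentiation'' and makes the latter remark in its timing discussion.
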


The proof is in Appendix~\ref{app:proof-information}. The key observation is that
$(1+q)m=b+q\theta+q\varepsilon$. When the state is not separately observed, the
noise around $b$ has variance $q^2+q$. A high-$q$ score closely follows its evidence
and contains relatively little of the engine's calibration. A low-$q$ score is less
accurate about the current state but better training data for interpreting the engine.

This terminology is relative. A low-evidence score is ``opinion intensive'' because
the fixed calibration receives greater weight, not because the firm distorts or
fabricates a report. Customers understand the production rule and optimally discount
calibration uncertainty in their current action.

Observable state feedback changes the magnitude but not the sign. If an independent
signal of the first state with precision $\rho$ becomes public before date 2, the
calibration increment is
\begin{equation}\label{eq:lambda-rho}
\lambda(q;\rho)=\frac{1+\rho}{q(1+\rho+q)},
\end{equation}
which remains decreasing in $q$. Full revelation of the first state is the limit and
gives $1/q$. Appendix~\ref{app:feedback} supplies the derivation.

For a first-period choice $q_1$ and a second-period precision $q_2$, define
\begin{equation}\label{eq:values}
v(q_1)=V(\eta,q_1),
\qquad
h(q_1,q_2)=V(\eta+\lambda(q_1),q_2).
\end{equation}
Lemma~\ref{lem:information} implies
\begin{equation}\label{eq:h-derivative}
h_{q_1}(q_1,q_2)
=V_x(\eta+\lambda(q_1),q_2)\lambda'(q_1)<0.
\end{equation}
More initial evidence raises $v$ and lowers the public's value of a fixed future
forecast.

\section{Pricing a perishable forecast}\label{sec:pricing}

At a date when the score's unit-stakes value is $z>0$, type $\gamma$ buys at price
$p$ if and only if $\gamma\ge p/z$. Writing $c=p/z$, revenue is
\begin{equation}\label{eq:static-revenue}
pS(p/z)=z cS(c).
\end{equation}
The unique optimal cutoff is therefore the static cutoff $c^*$ defined in
\eqref{eq:static-M}.

\begin{lemma}[Pricing with delayed public release]\label{lem:pricing}
For any quality choices $(q_1,q_2)$, the unique equilibrium allocation outside
customer indifference uses cutoff $c^*$ at both dates. Prices and discounted revenue
before production costs are
\begin{align}
p_1(q_1)&=c^*V(\eta,q_1),\label{eq:p1}\\
p_2(q_1,q_2)&=c^*V(\eta+\lambda(q_1),q_2),\label{eq:p2}\\
R(q_1,q_2)&=M\left[V(\eta,q_1)
+\beta V(\eta+\lambda(q_1),q_2)\right].\label{eq:revenue}
\end{align}
The first-period price is the static monopoly price for the current product. The firm
does not expand initial demand to create the public decoder.
\end{lemma}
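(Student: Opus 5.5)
The argument is backward induction over public histories; the key point is that the first-period purchase decision is decoupled from the future decoder.

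\emph{Date 2.} Fix $(q_1,q_2)$ and any realization of $m_1$. By the timing in Section~\ref{sec:timing}, every customer holds the same public posterior for $b$ when deciding whether to buy the second score. Its precision is $\eta+\lambda(q_1)$ by Lemma~\ref{lem:information}, whichever customers bought at date 1. The posterior mean depends on $m_1$, but by translation invariance it does not affect decision value. The second score therefore has unit-stakes value $z_2=V(\eta+\lambda(q_1),q_2)>0$, and type $\gamma$ buys at price $p$ if and only if $\gamma z_2\ge p$. There is no later period, so the firm maximizes $pS(p/z_2)=z_2\,cS(c)$ with $c=p/z_2$. Uniqueness and interiority of $c^*$ in \eqref{eq:static-M} give the unique optimal cutoff $c^*$, the price $p_2=c^*z_2$, and revenue $Mz_2$. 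Customer indifference occurs only at $\gamma=c^*$, a null set under continuous $F$, so the allocation is unique outside indifference.

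\emph{Date 1.} This is the step I expect to need the most care. I must show that neither customers nor the firm attach any continuation value to the date-1 purchase decision.

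\textbf{Customers.} Type $\gamma$'s date-2 payoff is $\max\{\gamma z_2-p_2,0\}$ in expectation. Here $z_2$ and $p_2$ depend only on $q_1,q_2$ and the public archive, which contains $m_1$ regardless of who bought. The date-1 purchase therefore does not affect the date-2 payoff. Also, $m_1$ bought at date 1 carries no information about $\theta_2$, since the states are independent. So buying is optimal if and only if $\gamma V(\eta,q_1)\ge p_1$.

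\textbf{Firm.} Its date-2 revenue $\beta Mz_2$ depends only on $(q_1,q_2)$, not on the date-1 subscriber set. Hence $p_1$ enters only through date-1 revenue $p_1S(p_1/V(\eta,q_1))$, and the static argument again yields cutoff $c^*$ and \eqref{eq:p1}.

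Summing the two periods' revenues with discount $\beta$ gives \eqref{eq:revenue}.

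To make the decoupling rigorous, I would state explicitly that strategies may condition on the public history $(q_1,p_1,m_1)$. Because the aggregate date-1 demand is not part of what affects date-2 payoffs, no deviation in $p_1$ changes continuation play in payoff-relevant terms. Any equilibrium in which continuation prices depended on date-1 sales would still produce the same $z_2$, and hence the same optimal continuation.

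The closing claim, that the firm does not expand initial demand, then follows directly: the date-1 cutoff equals the static cutoff $c^*$, the same as at date 2.
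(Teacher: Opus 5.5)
Your proposal is correct and follows the paper's proof: apply the static cutoff calculation (price $p=cz$, revenue $zcS(c)$, unique cutoff $c^*$) at each date, and use delayed public release to make the date-2 information set, with calibration precision $\eta+\lambda(q_1)$, independent of who bought at date 1. Your backward-induction write-up spells out the customer-side and firm-side decoupling that the paper compresses into the single sentence ``Delayed release makes the date-2 information set independent of the date-1 purchase decision.''
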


The first score is released only after its current decision value has expired.
Consequently, buying it gives no exclusive continuation benefit: all prospective
customers see the same public history before the second sale. This is why forward-
looking behavior creates no introductory subsidy. The initial audience is constant
even though the first score changes the value and price of the future product.

The result is also robust to private source learning. Appendix~\ref{app:private-history}
allows only first-period buyers to observe the score. It proves, for the same arbitrary
$F$, that the learning rent capitalized in the first price plus second-period revenue
is bounded by the static revenue coefficient. The optimal cutoff remains $c^*$ at
both dates and revenue remains \eqref{eq:revenue}. This stronger result distinguishes
the mechanism from a trial strategy even when interpretive capital is customer
specific.

Let
\begin{equation}\label{eq:A}
A:=\int_{c^*}^{\bar\gamma}\gamma\,dF(\gamma).
\end{equation}
The gross value generated for equilibrium buyers by a unit-value score is $A$, while
revenue is $M$. Because inframarginal buyers receive positive surplus, $A>M$ for a
nondegenerate subscriber set. Under uniform stakes, $A=3/8$ and consumer surplus per
unit of information value is $A-M=1/8$.

\section{Endogenous evidence sequencing}\label{sec:quality}

At date 2, when the public's calibration precision is $x$, optimized continuation
profit is
\begin{equation}\label{eq:W}
\mathcal W(x)
=\max\left\{0,\ \max_{q_2\in Q}
\left[M V(x,q_2)-C(q_2)\right]\right\}.
\end{equation}
The outside option permits market shutdown. Since $V_x>0$, $\mathcal W$ is weakly
increasing and is strictly increasing wherever an active product is optimal.

The date-1 evidence problem is
\begin{equation}\label{eq:q1-problem}
\max_{q_1\in Q}
\left\{
M V(\eta,q_1)-C(q_1)
+\beta\mathcal W(\eta+\lambda(q_1))
\right\}.
\end{equation}
Let $q^M$ be the unique maximizer of the myopic objective
$M V(\eta,q)-C(q)$.

\begin{proposition}[Patience lowers initial evidence]\label{prop:patience}
The greatest and least maximizers of \eqref{eq:q1-problem} are nonincreasing in
$\beta$. Every dynamic optimum is weakly below $q^M$. If $q^M$ is interior, $C$ is
differentiable at $q^M$, and the second-period product is uniquely active in a
neighborhood of $\eta+\lambda(q^M)$, then every dynamic optimum is strictly below
$q^M$ for $\beta>0$.
\end{proposition}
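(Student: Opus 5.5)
The plan is to treat \eqref{eq:q1-problem} as a one-parameter family of problems on the chain $Q$ and apply monotone comparative statics. Write the objective as $f(q,\beta)=g(q)+\beta G(q)$, where $g(q):=MV(\eta,q)-C(q)$ is the myopic profit and $G(q):=\mathcal W(\eta+\lambda(q))$ is the continuation value.

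\emph{Step 1: $G$ is continuous and nonincreasing.} $\mathcal W$ is a maximum of a family of functions that are jointly continuous in $(x,q_2)$, taken over compact $Q$, together with the constant $0$. By Berge's theorem it is continuous. By Lemma~\ref{lem:information}, $V_x>0$, so $\mathcal W$ is weakly increasing. Lemma~\ref{lem:information} also gives $\lambda'<0$. Hence $G$ is continuous and nonincreasing in $q$.

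\emph{Step 2: monotone maximizers.} Since $\partial f/\partial\beta=G(q)$ is nonincreasing in $q$, $f$ has decreasing differences in $(q,\beta)$. Equivalently, $f$ has increasing differences in $(-q,\beta)$. $Q$ is a compact chain and $f(\cdot,\beta)$ is continuous, so maximizers exist and the argmax set has greatest and least elements. Topkis's theorem then says that the greatest and least maximizers are nonincreasing in $\beta$.

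\emph{Step 3: the weak bound $q\le q^M$.} This is direct. For $q>q^M$, uniqueness of $q^M$ gives $g(q)<g(q^M)$, and Step 1 gives $G(q)\le G(q^M)$. Therefore $f(q,\beta)<f(q^M,\beta)$ for every $\beta\ge0$, so no optimum lies above $q^M$.

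\emph{Step 4: the strict bound.} It suffices to show that $q^M$ itself is not optimal when $\beta>0$, since Step 3 already excludes everything above it. Because $q^M$ is interior and $C$ is differentiable there, $g'(q^M)=0$. So for $q=q^M-\delta$ we have $g(q^M)-g(q)=o(\delta)$.

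I expect the main obstacle to be that $\mathcal W$ need not be differentiable, because the maximizer $q_2$ may jump. I will avoid any envelope theorem and use a one-sided revealed-preference bound instead.

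\begin{itemize}
\item Let $x_0=\eta+\lambda(q^M)$ and let $q_2^*$ be an optimal active choice at $x_0$. The assumption that the second-period product is uniquely active near $x_0$ guarantees that the active branch is the relevant one, so $\mathcal W(x_0)=MV(x_0,q_2^*)-C(q_2^*)$.
\item For $x'>x_0$, choosing $q_2^*$ at $x'$ is feasible, so
\[
\mathcal W(x')-\mathcal W(x_0)\ge M\bigl[V(x',q_2^*)-V(x_0,q_2^*)\bigr]\ge M\kappa\,(x'-x_0),
\]
where $\kappa:=\min\{V_x(x,q_2):x\in[x_0,x_0+1],\,q_2\in Q\}>0$ by \eqref{eq:V-derivatives} and compactness.
\item Since $\lambda'(q^M)<0$, for small $\delta$ we get $\lambda(q^M-\delta)-\lambda(q^M)\ge\tfrac12|\lambda'(q^M)|\delta$. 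Hence
\[
G(q^M-\delta)-G(q^M)\ge \tfrac12 M\kappa|\lambda'(q^M)|\,\delta .
\]
\end{itemize}

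Combining the two estimates,
\[
f(q^M-\delta,\beta)-f(q^M,\beta)\ge \tfrac12\beta M\kappa|\lambda'(q^M)|\,\delta-o(\delta)>0
\]
for small $\delta>0$. Interiority of $q^M$ ensures that $q^M-\delta\in Q$. So $q^M$ is not optimal, and with Step 3 every dynamic optimum lies strictly below $q^M$.
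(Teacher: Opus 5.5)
Your proof is correct, and Steps 1--3 follow the paper's argument. The continuation term $\mathcal W(\eta+\lambda(q))$ (the paper's $J$, your $G$) is weakly decreasing, which gives decreasing differences in $(q,\beta)$ and hence the monotone maximum theorem, and the comparison ruling out optima above $q^M$ is identical. Your Berge-continuity step also supplies the existence of greatest and least maximizers, which the paper leaves implicit.

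The routes part in the strict claim. The paper applies the envelope theorem to get an actual derivative $J'(q^M)=MV_x(\eta+\lambda(q^M),q_2^*)\lambda'(q^M)<0$, and then reads off that the dynamic objective has slope $\beta J'(q^M)<0$ at $q^M$. That step needs $\mathcal W$ to be differentiable at $x_0=\eta+\lambda(q^M)$, and the hypothesis of a unique active continuation choice near $x_0$ is there to deliver it. You instead bound only the lower right-hand difference quotient of $\mathcal W$ at $x_0$ by revealed preference. This suffices because lowering $q_1$ can only move $x$ to the right.

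As a result, your argument needs only that the active branch attains $\mathcal W$ at the single point $x_0$. Multiple optimal continuation formats and a tie with shutdown are both allowed. So you prove the strict statement under weaker hypotheses and without any differentiability of the value function. The paper's route buys a closed-form marginal dynamic incentive, which helps interpretation but is not needed for the conclusion.

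One detail to make explicit: $\kappa$ applies only once $\delta$ is small enough that $\lambda(q^M-\delta)-\lambda(q^M)\le 1$. This follows from continuity of $\lambda$.
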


The continuation term is decreasing in $q_1$: weaker first-period evidence makes the
source easier to interpret. The objective therefore has decreasing differences in
initial evidence and provider patience. The strict statement uses the regularity
needed for an envelope derivative; continuity alone yields only the weak result.

Proposition~\ref{prop:patience} compares the first choice with a myopic benchmark. We
next compare the two endogenous choices. Direct differentiation gives
\begin{equation}\label{eq:cross-partial}
V_{qx}(x,q)
=\frac{2q(1-xq^2)}{[1+xq(1+q)]^3}.
\end{equation}
Familiarity and evidence are complements when $xq^2<1$.

\begin{proposition}[Continuous low-then-high evidence]\label{prop:continuous-sequence}
Suppose the same compact set $Q=[\underline q,\overline q]$ and cost $C$ apply at both
dates, the active static evidence optimizer $q^M(x)$ is unique for each relevant $x$,
and
\begin{equation}\label{eq:ID-condition}
\bigl[\eta+\lambda(\underline q)\bigr]\overline q^{,2}<1.
\end{equation}
Under the strict conditions of Proposition~\ref{prop:patience}, every dynamic solution
satisfies
\begin{equation}\label{eq:q1q2-continuous}
q_1<q^M(\eta)
\le q^M(\eta+\lambda(q_1))=q_2.
\end{equation}
Thus the source endogenously uses less evidence in its first forecast than in its
second.
\end{proposition}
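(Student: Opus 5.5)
The chain \eqref{eq:q1q2-continuous} has three pieces, and I will establish them in order.

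\emph{First inequality.} $q_1<q^M(\eta)$ is exactly the strict conclusion of Proposition~\ref{prop:patience}, since $q^M=q^M(\eta)$ is the myopic optimizer at prior precision $\eta$. Under the stated strict conditions and $\beta>0$, every dynamic optimum lies strictly below $q^M(\eta)$. I will simply invoke that result.

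\emph{Final equality.} Once $q_1$ is fixed, the date-2 problem \eqref{eq:W} is the static problem $\max_{q_2\in Q}[MV(x,q_2)-C(q_2)]$ with $x=\eta+\lambda(q_1)$. Two facts are needed here. The first is the uniqueness assumption on the active static optimizer. The second is that the product is active, which is part of the strict hypotheses of Proposition~\ref{prop:patience}. If activity is only assumed near $\eta+\lambda(q^M)$, I will note that $\mathcal W$ is nondecreasing in $x$. Because $\lambda$ is decreasing, $q_1<q^M$ gives $\lambda(q_1)>\lambda(q^M)$, so the larger precision keeps the product active. Hence $q_2=q^M(\eta+\lambda(q_1))$.

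\emph{Middle inequality, the core step.} I must show that $x\mapsto q^M(x)$ is nondecreasing on the relevant range $x\in[\eta,\eta+\lambda(\underline q)]$. Since $\lambda$ is decreasing, $\lambda(q_1)\le\lambda(\underline q)$, so $x\le\eta+\lambda(\underline q)$ for every feasible $q_1$. On this range, for every $q\in Q$ we have $xq^2\le[\eta+\lambda(\underline q)]\overline q^{\,2}<1$ by \eqref{eq:ID-condition}. Then \eqref{eq:cross-partial} gives $V_{qx}(x,q)>0$ on the product $[\eta,\eta+\lambda(\underline q)]\times Q$, which is a lattice.

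The objective $MV(x,q)-C(q)$ therefore has strictly increasing differences in $(q,x)$. The cost term does not depend on $x$, so it does not affect this property; no differentiability of $C$ is needed, and integrating $V_{qx}>0$ suffices. By Topkis's monotonicity theorem, or the strict version of Edlin--Shannon / Milgrom--Shannon, the argmax correspondence is nondecreasing in the strong set order. With a unique maximizer, $q^M(x)$ is a nondecreasing function. Applying this with $x=\eta\le\eta+\lambda(q_1)$ yields $q^M(\eta)\le q^M(\eta+\lambda(q_1))$.

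Chaining the three pieces gives $q_1<q^M(\eta)\le q_2$.

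The main obstacle is bookkeeping rather than a deep step. I must ensure that the condition \eqref{eq:ID-condition} covers the whole interval of $x$ between $\eta$ and $\eta+\lambda(q_1)$ and the whole of $Q$, which it does because both bounds are taken at the extremes. I must also ensure the active region is correctly handled: $\eta$ itself is active by the hypothesis that $q^M$ is a well-defined active optimizer. One further point is that uniqueness at each relevant $x$ is what lets the strong-set-order monotonicity be stated for single-valued functions. If ties were possible, I would instead use the greatest and least selections and state the result for them.
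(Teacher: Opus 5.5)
Your proposal is correct and follows essentially the same route as the paper: you invoke Proposition~\ref{prop:patience} for $q_1<q^M(\eta)$, use \eqref{eq:ID-condition} to sign $V_{qx}$ on $[\eta,\eta+\lambda(\underline q)]\times Q$, and apply Topkis to obtain $q^M(\eta)\le q^M(\eta+\lambda(q_1))$. Your explicit check that the date-2 product stays active at $\eta+\lambda(q_1)>\eta+\lambda(q^M)$, via monotonicity of $\mathcal W$, fills in a step the paper leaves implicit when it writes $q^M(x_1)=q_2$.
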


Condition \eqref{eq:ID-condition} is sufficient, not necessary. It ensures increasing
differences over the entire relevant rectangle, so familiarity raises the static
evidence choice. Without a restriction of this kind, an unconditional sequencing
claim is false because \eqref{eq:cross-partial} can change sign.

\subsection{Two standardized formats}\label{sec:binary}

The binary version gives a sharper and more readily interpretable result. Suppose the
same menu $Q=\{q_L,q_H\}$ is available at both dates, where $q_H>q_L$, and define
\begin{align}
\Delta V&=V(\eta,q_H)-V(\eta,q_L)>0,\label{eq:deltaV}\\
\Delta C&=C(q_H)-C(q_L)\ge0,\label{eq:deltaC}\\
\Delta H&=V(\eta+\lambda(q_L),q_H)
-V(\eta+\lambda(q_H),q_H)>0.\label{eq:deltaH}
\end{align}

For later familiarity precision $x$, write
$D(x)=V(x,q_H)-V(x,q_L)$. One can verify that
\begin{equation}\label{eq:D-sign}
\operatorname{sign}D'(x)
=\operatorname{sign}(1-xq_Lq_H).
\end{equation}

\begin{corollary}[Endogenous opinions before evidence]\label{cor:binary-sequence}
Suppose
\begin{equation}\label{eq:binary-second-condition}
\eta+\lambda(q_L)<\frac{1}{q_Lq_H},
\qquad
\Delta V>\frac{\Delta C}{M},
\end{equation}
and
\begin{equation}\label{eq:threshold-interior}
0<\Delta V-\frac{\Delta C}{M}<\Delta H.
\end{equation}
Then $q_2=q_H$ after either first-period format. The provider selects $q_1=q_L$ if
and only if
\begin{equation}\label{eq:betaP}
\beta\ge\beta_P
:=\frac{\Delta V-\Delta C/M}{\Delta H}.
\end{equation}
For $\beta>\beta_P$, the endogenous sequence is $q_1=q_L<q_H=q_2$. At the switch,
the initial price falls, the second price rises, and the subscriber share remains
$S(c^*)$ at both dates.
\end{corollary}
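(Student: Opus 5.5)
The plan is a backward-induction argument on the binary menu, combined with Lemma~\ref{lem:pricing} for the price and share statements. Everything reduces to two comparisons: the date-2 choice between $q_H$ and $q_L$ at each reachable familiarity level, and the date-1 comparison of total discounted profit.

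\emph{Date-2 choice.} After first-period format $q\in\{q_L,q_H\}$, familiarity is $x=\eta+\lambda(q)$. Since $\lambda'<0$, the reachable values satisfy
\[
\eta<\eta+\lambda(q_H)<\eta+\lambda(q_L).
\]
The first inequality in \eqref{eq:binary-second-condition} gives $xq_Lq_H<1$ on the whole interval $[\eta,\eta+\lambda(q_L)]$. By \eqref{eq:D-sign}, $D$ is therefore strictly increasing there. It follows that
\[
MD(x)-\Delta C\ \ge\ MD(\eta)-\Delta C\ =\ M\Delta V-\Delta C\ >\ 0,
\]
by the second inequality in \eqref{eq:binary-second-condition}. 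So $q_H$ strictly beats $q_L$ after either history.

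I still need the active product to beat shutdown in \eqref{eq:W}. I expect this to be the one delicate point, because the stated hypotheses do not obviously imply $MV(x,q_H)>C(q_H)$. My first attempt is to read the corollary as presuming an active market, as in Proposition~\ref{prop:patience}'s ``uniquely active'' condition. If a clean argument is needed, I would note that if the date-1 product is active, then $MV(\eta,q_H)-C(q_H)\ge0$ via $q_H\succ q_L$ at $\eta$. Since $V_x>0$, this activity then propagates to every larger $x$. I would state this activity assumption explicitly rather than hide it.

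\emph{Date-1 comparison.} With $q_2=q_H$ fixed, $\mathcal W(x)=MV(x,q_H)-C(q_H)$. Total profit from $q_L$ minus total profit from $q_H$ is
\[
-(M\Delta V-\Delta C)+\beta M\Delta H,
\]
after multiplying through appropriately. Let me check the factor of $M$ carefully: the continuation difference is $M[V(\eta+\lambda(q_L),q_H)-V(\eta+\lambda(q_H),q_H)]=M\Delta H$. Hence $q_L$ is weakly preferred if and only if $\beta\ge(\Delta V-\Delta C/M)/\Delta H=\beta_P$, with indifference resolved toward $q_L$ at equality as the ``if and only if'' requires. Condition \eqref{eq:threshold-interior} places $\beta_P$ in $(0,1)$, so both regimes occur for admissible $\beta$. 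For $\beta>\beta_P$, $q_L$ is strictly optimal and the sequence is $q_L<q_H$.

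\emph{Prices and shares.} These follow from Lemma~\ref{lem:pricing}. The cutoff is $c^*$ at both dates regardless of quality, so the subscriber share is $S(c^*)$ throughout. Switching $q_1$ from $q_H$ to $q_L$ changes $p_1=c^*V(\eta,q_1)$ by $-c^*\Delta V<0$, using $V_q>0$ from Lemma~\ref{lem:information}. It changes $p_2=c^*V(\eta+\lambda(q_1),q_H)$ by $c^*\Delta H>0$, using $V_x>0$ and $\lambda(q_L)>\lambda(q_H)$.
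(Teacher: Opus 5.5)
Your proposal is correct and follows the paper's proof step for step. The sign of $D'$ on $[\eta,\eta+\lambda(q_L)]$ together with $M\Delta V>\Delta C$ gives $q_2=q_H$ after either history, the profit difference $-M\Delta V+\Delta C+\beta M\Delta H$ gives $\beta_P$, and Lemma~\ref{lem:pricing} delivers the price and share claims.

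Your point that the shutdown option in \eqref{eq:W} requires an extra activity condition is well taken, since the paper's proof leaves it implicit. Assuming $MV(\eta,q_H)\ge C(q_H)$ and propagating it to larger $x$ via $V_x>0$, as you suggest, is the right way to close it.
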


The first condition in \eqref{eq:binary-second-condition} makes the high format's
static advantage increasing over all familiarity levels induced by either first
format. Since high evidence is already statically optimal at the initial precision,
it is optimal in period 2 after either history. The provider nevertheless chooses low
evidence first when the value of producing a better decoder outweighs the current
accuracy loss net of the cost saving.

\subsection{Numerical illustration}\label{sec:numerical}

Let $F$ be uniform on $[0,1]$ and set
\begin{equation}\label{eq:numeric-parameters}
\eta=0.1,\qquad q_L=0.5,\qquad q_H=1,
\qquad \Delta C=0.008.
\end{equation}
Then $c^*=1/2$, $M=1/4$, and $A=3/8$. Table~\ref{tab:numeric} reports the main
objects.

\begin{table}[H]
\centering
\caption{Low- and high-evidence first forecasts}\label{tab:numeric}
\begin{tabular}{lccc}
\toprule
& Low evidence $q_L$ & High evidence $q_H$ & Relevant difference\\
\midrule
Current value $V(\eta,q)$ & 0.02326 & 0.08333 & $\Delta V=0.06008$\\
Calibration learning $\lambda(q)$ & 1.33333 & 0.50000 & ---\\
Future value with $q_2=q_H$ & 0.37069 & 0.27273 & $\Delta H=0.09796$\\
Initial price $p_1$ & 0.01163 & 0.04167 & ---\\
Second price $p_2$ & 0.18534 & 0.13636 & ---\\
\bottomrule
\end{tabular}
\begin{minipage}{0.91\textwidth}
\footnotesize\vspace{0.35em}
\emph{Notes:} Values use \eqref{eq:V} and \eqref{eq:posterior-precision}; prices use
$c^*=1/2$. The second-period product is chosen endogenously.
\end{minipage}
\end{table}

The second-period high-minus-low value gain is $0.19800$ following $q_L$ and
$0.16928$ following $q_H$. Both exceed $\Delta C/M=0.032$, so high evidence is
endogenously selected at date 2 after either first format. The private threshold is
\begin{equation}\label{eq:numeric-betaP}
\beta_P=0.28662.
\end{equation}
At $\beta=0.34$, for example, the provider chooses $q_1=0.5$ and $q_2=1$.
The sequence is driven by source learning rather than an assumed difference in
technologies across dates.

\section{Welfare, standards, and calibration audits}\label{sec:policy}

\subsection{A constrained evidentiary-input standard}\label{sec:standard}

We first consider a planner who can require one of the two evidence formats but leaves
monopoly pricing and access in place. This second-best exercise isolates the product-
quality margin. Prices are transfers, while production costs are real. Because the
equilibrium subscriber set has gross-value coefficient $A$ from \eqref{eq:A}, and
because $q_2=q_H$ after either first format under Corollary~\ref{cor:binary-sequence},
the terms relevant to the first-period comparison are
\begin{equation}\label{eq:constrained-welfare}
W(q_1;\beta)
=A\left[V(\eta,q_1)
+\beta V(\eta+\lambda(q_1),q_H)\right]-C(q_1).
\end{equation}
The common second-period production cost is omitted from \eqref{eq:constrained-welfare}.

\begin{proposition}[When an evidence standard helps]\label{prop:standard}
Suppose the private threshold in \eqref{eq:betaP} and
\begin{equation}\label{eq:betaS}
\beta_S
:=\frac{\Delta V-\Delta C/A}{\Delta H}
\end{equation}
both lie in $(0,1)$. The constrained planner selects $q_L$ if and only if
$\beta\ge\beta_S$. If $\Delta C>0$, then
\begin{equation}\label{eq:threshold-order}
\beta_P<\beta_S.
\end{equation}
Consequently, a standard requiring $q_H$ raises constrained welfare for
$\beta\in(\beta_P,\beta_S)$ and lowers it for $\beta>\beta_S$. If $\Delta C=0$,
the private and constrained-social thresholds coincide.
\end{proposition}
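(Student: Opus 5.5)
The plan is to compare the two values of the constrained welfare function \eqref{eq:constrained-welfare} directly, in the same way Corollary~\ref{cor:binary-sequence} compares the provider's two options. The first step writes the difference
\begin{equation*}
W(q_L;\beta)-W(q_H;\beta)
=A\bigl[-\Delta V+\beta\,\Delta H\bigr]+\Delta C ,
\end{equation*}
using the definitions \eqref{eq:deltaV}--\eqref{eq:deltaH}. This is legitimate because, under the maintained hypotheses of Corollary~\ref{cor:binary-sequence}, $q_2=q_H$ after either first format. The common period-2 cost $C(q_H)$ therefore cancels, and the period-2 gross value enters only through $\Delta H$. Dividing by $A\Delta H>0$ shows that the difference is nonnegative if and only if $\beta\ge(\Delta V-\Delta C/A)/\Delta H=\beta_S$. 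This gives the planner's selection rule, with ties resolved toward $q_L$ to match the statement's ``if and only if.'' The provider's rule $\beta\ge\beta_P$ is taken directly from Corollary~\ref{cor:binary-sequence}; there the coefficient is $M$ rather than $A$, because revenue is $M$ times information value by Lemma~\ref{lem:pricing}.

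The second step orders the thresholds. Both share the numerator term $\Delta V$ and the denominator $\Delta H$, so
\begin{equation*}
\beta_S-\beta_P=\frac{\Delta C}{\Delta H}\Bigl(\frac1M-\frac1A\Bigr).
\end{equation*}
The inequality $A>M$ was noted after \eqref{eq:A}. It holds because $M=c^*S(c^*)=\int_{c^*}^{\bar\gamma}c^*\,dF<\int_{c^*}^{\bar\gamma}\gamma\,dF$ whenever the subscriber set has positive mass, which follows from interiority of $c^*$ and $M>0$. Hence $\Delta C>0$ gives $\beta_P<\beta_S$, and $\Delta C=0$ gives equality.

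The third step reads off the policy implications. For $\beta\in(\beta_P,\beta_S)$ the provider chooses $q_L$, while the planner strictly prefers $q_H$. The strict preference holds because $W(q_L)-W(q_H)$ is strictly increasing and affine in $\beta$, and it vanishes at $\beta_S$. A standard mandating $q_H$ therefore strictly raises $W$. The standard leaves pricing unchanged: the cutoff stays at $c^*$ by Lemma~\ref{lem:pricing}, and $q_2=q_H$ in either case, so $A$ remains the correct welfare coefficient. For $\beta>\beta_S$ both the provider and the planner prefer $q_L$, so the binding standard strictly lowers $W$.

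The main point needing care is the justification that the welfare comparison collapses to $A$ times the value difference. This requires that the subscriber set is the same $S(c^*)$ at both dates regardless of format, which Lemma~\ref{lem:pricing} supplies, and that the period-2 format is unaffected by the standard. I would state explicitly that the standard applies only to $q_1$, so that Corollary~\ref{cor:binary-sequence}'s conclusion $q_2=q_H$ continues to govern date 2.
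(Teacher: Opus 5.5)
Your proposal is correct and follows essentially the same route as the paper: you compute the low-minus-high welfare difference $-A\Delta V+\Delta C+\beta A\Delta H$, read off $\beta_S$, and order the thresholds via $\beta_S-\beta_P=\frac{\Delta C}{\Delta H}\bigl(\frac1M-\frac1A\bigr)$. Your added justifications fill in details the paper leaves implicit: why $A>M$ holds strictly, why the planner strictly prefers $q_H$ on $(\beta_P,\beta_S)$, and why the standard leaves the cutoff $c^*$ and $q_2=q_H$ unchanged.
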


The provider weights a unit of information value by the monopoly-revenue coefficient
$M$. The constrained planner weights it by aggregate customer stakes $A>M$. Both value
the future interpretability created by low evidence, but the provider undercaptures
the current return to costly evidence and therefore switches to low-evidence
onboarding sooner.

The policy instrument is an input rule, not an administrative judgment about whether
a forecast is correct. Examples include a minimum sample size, validation protocol,
research-input requirement, or forecast-resolution standard. Its feasibility depends
on the same auditability of $q$ used by customers. The proposition is not a general
argument for regulating opinions or speech.

In the numerical example,
\begin{equation}\label{eq:numeric-betaS}
\beta_S=0.39550.
\end{equation}
Thus $\beta=0.34$ lies in the policy region: the firm chooses low evidence first,
whereas the constrained planner selects high evidence. Above $0.39550$, low-evidence
onboarding is also constrained efficient because its interpretive return dominates
the current loss.

The conclusion survives a planner who also controls access. Let
\begin{equation}\label{eq:B}
B:=\int_0^{\bar\gamma}\gamma\,dF(\gamma)
\end{equation}
be total stakes under universal access. The corresponding format threshold is
\begin{equation}\label{eq:beta-full-access}
\beta_{FA}=\frac{\Delta V-\Delta C/B}{\Delta H}.
\end{equation}
Since $B>A>M$, one has
$\beta_P<\beta_S<\beta_{FA}$ whenever the thresholds are interior and
$\Delta C>0$. For uniform stakes, $B=1/2$ and the numerical full-access threshold is
$0.44994$. Monopoly therefore moves to low evidence too early even relative to a
planner who removes the access distortion.

\subsection{A costly calibration audit}\label{sec:audit}

An evidence standard constrains the source's input but leaves customers uncertain
about its calibration. Consider instead an external audit undertaken before date 1.
The auditor uses benchmark cases or privileged forecast--outcome data to reveal $b$
publicly at real cost $K\ge0$, paid by the provider. This perfect audit is the
transparent limit of a noisy certificate $z=b+u$.

With known $b$, customers can remove the calibration component from
\eqref{eq:score}. A precision-$q$ score is equivalent to the raw evidence and has
unit-stakes value
\begin{equation}\label{eq:Vhat}
\Vhat(q)=\frac{q}{1+q}.
\end{equation}
The first score no longer creates a decoder, so the audited provider chooses high
evidence in the binary menu whenever
\begin{equation}\label{eq:audit-high-condition}
M\bigl[\Vhat(q_H)-\Vhat(q_L)\bigr]\ge\Delta C.
\end{equation}
At equality $q_H$ is an optimum; strict inequality gives strict choice.

Suppose the unaudited provider selects $q_L$ while the audited provider selects $q_H$,
and second-period evidence is $q_H$ in both regimes. Define the total information-value
gain
\begin{equation}\label{eq:Z}
Z:=\Vhat(q_H)-V(\eta,q_L)
+\beta\left[
\Vhat(q_H)-V(\eta+\lambda(q_L),q_H)
\right]>0.
\end{equation}
The provider's optimized gross gain from adoption is $MZ-\Delta C$, whereas the
constrained social gain before audit cost is $AZ-\Delta C$.

\begin{proposition}[Private underadoption of calibration]\label{prop:audit}
Under \eqref{eq:audit-high-condition}, the provider adopts the perfect calibration
audit if and only if
\begin{equation}\label{eq:KP}
K\le K_P:=MZ-\Delta C.
\end{equation}
Constrained-social adoption is desirable if and only if
\begin{equation}\label{eq:KS}
K\le K_S:=AZ-\Delta C.
\end{equation}
Whenever $Z>0$ and customers earn positive surplus,
\begin{equation}\label{eq:audit-wedge}
K_S-K_P=(A-M)Z>0.
\end{equation}
Thus calibration is voluntary for $K<K_P$, socially valuable but privately
unprofitable for $K\in(K_P,K_S)$, and undesirable for $K>K_S$, with indifference at
the boundaries.
\end{proposition}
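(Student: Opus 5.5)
The plan is to compute the provider's optimized discounted profit with and without the audit and take the difference. Lemma~\ref{lem:pricing} gives revenue $M$ times total information value at each date, independent of the stakes distribution. Without the audit, the standing hypothesis (unaudited choice $q_1=q_L$, $q_2=q_H$) together with \eqref{eq:revenue} gives profit
\[
\Pi_N=M\bigl[V(\eta,q_L)+\beta V(\eta+\lambda(q_L),q_H)\bigr]-C(q_L)-\beta C(q_H).
\]
With the audit, $b$ is public before date 1. I would first verify that the pricing logic of Lemma~\ref{lem:pricing} carries over unchanged. The demand cutoff argument uses only the unit-stakes value $z$, so with $z=\Vhat(q)$ from \eqref{eq:Vhat} each date again uses cutoff $c^*$ and earns $M\Vhat(q_t)$. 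Because the first score no longer affects the second-period value, the dynamic problem separates into two identical static problems. Condition \eqref{eq:audit-high-condition} then makes $q_H$ optimal at each date, so
\[
\Pi_A=M(1+\beta)\Vhat(q_H)-(1+\beta)C(q_H)-K.
\]
The second-period cost $\beta C(q_H)$ is common to both regimes and cancels, leaving
\[
\Pi_A-\Pi_N=MZ-\Delta C-K.
\]
Adoption is therefore optimal iff $K\le K_P$, with the stated indifference at equality. If \eqref{eq:audit-high-condition} holds with equality, $q_H$ is still an optimum, so the optimized value is unchanged.

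For the social side, I would use the constrained-welfare accounting of \eqref{eq:constrained-welfare}. Monopoly access is held fixed at cutoff $c^*$, which Lemma~\ref{lem:pricing} shows is also the audited allocation. Prices are transfers, so each unit of information value is weighted by $A$ from \eqref{eq:A}. Real costs, including $K$, are subtracted. Repeating the same difference with $A$ in place of $M$ gives social gain $AZ-\Delta C-K$, hence adoption is desirable iff $K\le K_S$. Subtracting the two thresholds gives $K_S-K_P=(A-M)Z$. Positive consumer surplus means $A>M$, and with $Z>0$ the wedge is strictly positive. The three $K$-regions and the boundary indifference then follow immediately.

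The comparison should be between the optimized regimes the provider and planner actually face. The planner here is the constrained planner who leaves the provider's format choices in place in each regime, so the social comparison uses the same quality paths $(q_L,q_H)$ and $(q_H,q_H)$. I would state this explicitly so that $K_S$ is not confused with a planner who also re-optimizes formats.

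The main obstacle is confirming that $Z>0$ is consistent and that the audited path is as claimed. This requires two checks. First, $\Vhat(q)\ge V(x,q)$ for any finite $x$, which holds because $V(x,q)\to q/(1+q)$ from below as $x\to\infty$ and $V_x>0$ by Lemma~\ref{lem:information}. Second, $\Vhat(q_H)>V(\eta,q_L)$ by monotonicity in both arguments. Together these make both bracketed terms of \eqref{eq:Z} positive without appeal to the hypothesis. The remaining work is ensuring that second-period audited evidence is $q_H$, which uses \eqref{eq:audit-high-condition} a second time.
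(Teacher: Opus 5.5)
Your proposal is correct and follows the paper's proof. Both arguments show that the audit raises discounted revenue by $MZ$ and first-period cost by $\Delta C$, adds cost $K$, and leaves the second-period cost unchanged; repeating the accounting with the weight $A$ for the constrained planner and subtracting gives the wedge $(A-M)Z$. Your check that $\Vhat(q)>V(x,q)$ for every finite $x$, so that $Z>0$ holds automatically, is a valid small addition to what the paper simply asserts.
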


The audit supplies the missing decoder directly. It improves every score at a fixed
evidence input and removes the reason to sacrifice current evidence to teach source
calibration. But the provider captures only $M$ per unit of value, while its customers
receive gross value $A$. This standard monopoly wedge produces underadoption. A subsidy
or platform requirement has a role in the intermediate region even though the
provider itself benefits from being understood.

The audit also clarifies why the baseline does not unravel through cheap self-
disclosure. The firm's forecasting engine has a latent behavioral fixed effect rather
than a scalar known to management. Producing a credible decoder requires statistically
informative benchmark outputs and external verification. The cost $K$ represents
those resources. If $K=0$ and \eqref{eq:audit-high-condition} holds, the provider
adopts, exactly as the disclosure objection predicts.

For the numerical example at $\beta=0.34$,
\begin{equation}\label{eq:numeric-audit}
Z=0.52071,
\qquad
K_P=0.12218,
\qquad
K_S=0.18727.
\end{equation}
The interval $(0.12218,0.18727)$ is the calibration-underadoption region.

A noisy audit with precision $\kappa$ raises the public's initial calibration
precision from $\eta$ to $\eta+\kappa$. At any fixed quality sequence it raises both
current and future information values, and the perfect audit is the limit
$\kappa\to\infty$. The effect of $\kappa$ on the chosen evidence intensity is not
globally monotone because the cross-partial in \eqref{eq:cross-partial} can change sign.
The binary conditions above provide a transparent region in which a sufficiently
precise audit induces $q_H$.

\section{Predictions, scope, and conclusion}\label{sec:discussion}

The model yields three mechanism-specific predictions. First, an unfamiliar source
should use less evidence in its initial standardized output when the expected customer
relationship is longer or its calibration is more persistent. The effect should fade
as the source accumulates a public record. Second, low-evidence onboarding lowers the
first price but raises the value and price of a fixed high-evidence second forecast,
without changing the initial subscriber share. Third, an external calibration profile
should shift the source toward higher evidence and reduce the valuation gap between
new and familiar sources.

The most direct empirical test concerns interpretation rather than prices. Subjects
could be randomly exposed to a low- or high-evidence first score generated by the same
stable source and then asked to use a common later score. The mechanism predicts that
the low-evidence exposure improves later decoding more, especially when source
calibration persists. Providing an independent calibration map should eliminate this
advantage. The interaction between first-score evidence and calibration treatment is
more diagnostic than an introductory discount, which many conventional lifecycle
models can generate.

The institutional restrictions are deliberate. The paper applies to standardized
forecasts, ratings, analyst scores, and proprietary expert services in which evidence
inputs can be audited and one committed output is the contracted product. It does not
apply when a strategic source knows a scalar bias, freely chooses its message after
seeing evidence, or can costlessly publish raw inputs and an independently credible
decoder. In those settings disclosure, persuasion, or signaling must be modeled
directly and can eliminate the one-dimensional trade-off.

The public-history benchmark also marks a distinction from switching-cost models.
Every prospective customer receives the same delayed archive, so the source's
interpretive capital is public rather than privately installed. Appendix
\ref{app:private-history} shows that private subscriber learning produces the same
quality objective, but the main result does not depend on customer captivity.

Perfect persistence is a polar case. Appendix~\ref{app:persistence} lets the second
calibration satisfy $b_2=\rho b_1+\nu$. Low first-period evidence still improves the
future decoder whenever $\rho\ne0$, and the force vanishes continuously as
$\rho\to0$. More than two dates would make public calibration precision a state
variable and generate a dynamic allocation between current evidence and the production
of a source record. The two-date model isolates the first instance of that allocation.

The central conclusion is not that opinion is accurate or that evidence standards are
generally undesirable. It is that a standardized forecast can produce two forms of
information at once: information about a current state and information about the
source that generated it. A provider choosing evidence internalizes both products.
Low evidence can therefore be an investment in future interpretability even when
outputs are mechanically generated and customers are fully Bayesian. Calibration
policy is most useful when it supplies that interpretability without requiring the
provider to sacrifice current evidence.

\appendix

\section{Proofs for Sections~\ref{sec:information}--\ref{sec:quality}}
\label{app:core-proofs}

\subsection{Proof of Lemma~\ref{lem:information}}
\label{app:proof-information}

Given a public belief $b\sim N(\mu,x^{-1})$, subtracting the known mean from
\eqref{eq:rescaled-score} gives a signal of $\theta$ whose noise variance is
\begin{equation}\label{eq:signal-noise-appendix}
 \frac{1}{q}+\frac{1}{q^2x}.
\end{equation}
Its effective precision is therefore
\begin{equation}\label{eq:effective-precision}
 r(x,q)=\frac{q^2x}{1+qx}.
\end{equation}
Since the prior variance of $\theta$ is one, observing a Gaussian signal with
precision $r$ reduces posterior variance by $r/(1+r)$. Substitution of
\eqref{eq:effective-precision} gives \eqref{eq:V}. Direct differentiation gives
\eqref{eq:V-derivatives}.

To learn the calibration, rescale the observed score as
\begin{equation}\label{eq:b-signal-appendix}
 (1+q)m=b+q\theta+q\varepsilon.
\end{equation}
Conditional on $b$, the last two terms are independent Gaussian noise with variance
$q^2+q$. Normal conjugacy therefore raises the precision about $b$ by
$1/(q^2+q)=\lambda(q)$. Finally,
\begin{equation}\label{eq:lambda-prime-appendix}
 \lambda'(q)=-\frac{1+2q}{q^2(1+q)^2}<0.
\end{equation}
This proves the lemma. \qed

\subsection{Proof of Lemma~\ref{lem:pricing}}
\label{app:proof-pricing}

At either date, a type-$\gamma$ customer buys a unit-value-$z$ product at price $p$
if $\gamma z\ge p$. Parameterizing the price as $p=cz$, the firm's revenue is
$zcS(c)$. By the definition and uniqueness of $c^*$, the optimal cutoff is $c^*$,
the price is $c^*z$, and revenue is $Mz$.

Delayed release makes the date-2 information set independent of the date-1 purchase
decision. The two relevant unit values are consequently $V(\eta,q_1)$ and
$V(\eta+\lambda(q_1),q_2)$. Applying the static calculation at each date and
discounting the latter revenue by $\beta$ yields \eqref{eq:p1}--\eqref{eq:revenue}.
\qed

\subsection{Proof of Proposition~\ref{prop:patience}}
\label{app:proof-patience}

Write
\begin{equation}\label{eq:GJ}
 G(q)=MV(\eta,q)-C(q),
 \qquad
 J(q)=\mathcal W(\eta+\lambda(q)).
\end{equation}
The value function $\mathcal W$ is weakly increasing because $V_x>0$, while
$\lambda$ is strictly decreasing. Hence $J$ is weakly decreasing and
$G(q)+\beta J(q)$ has decreasing differences in $(q,\beta)$. The monotone maximum
theorem implies that the greatest and least optimal selections are nonincreasing in
$\beta$.

If $q>q^M$, uniqueness of $q^M$ gives $G(q)<G(q^M)$, and monotonicity of $J$ gives
$J(q)\le J(q^M)$. Thus no $q>q^M$ can solve the dynamic problem. For the strict
claim, let $q_2^*(x)$ denote the unique active continuation choice near
$x=\eta+\lambda(q^M)$. The envelope theorem gives
\begin{equation}\label{eq:J-prime}
 J'(q^M)
 =M V_x\!\left(\eta+\lambda(q^M),q_2^*\right)\lambda'(q^M)<0.
\end{equation}
Because the interior myopic optimum satisfies $G'(q^M)=0$, the derivative of the
dynamic objective at $q^M$ equals $\beta J'(q^M)<0$ for $\beta>0$. Therefore
$q^M$ is not dynamically optimal. Since every dynamic optimum is weakly below it,
every such optimum is strictly below it. \qed

\subsection{Proof of Proposition~\ref{prop:continuous-sequence}}
\label{app:proof-continuous}

Differentiating \eqref{eq:V-derivatives} yields \eqref{eq:cross-partial}. Condition
\eqref{eq:ID-condition} makes this cross-partial positive for every
$(x,q)$ in the relevant rectangle. The static objective
$MV(x,q)-C(q)$ therefore has increasing differences in familiarity $x$ and evidence
$q$. Its unique optimizer $q^M(x)$ is nondecreasing in $x$.

By Proposition~\ref{prop:patience}, $q_1<q^M(\eta)$. The first public score raises
calibration precision to $x_1=\eta+\lambda(q_1)>\eta$. Hence
\begin{equation}\label{eq:continuous-chain-appendix}
 q_1<q^M(\eta)\le q^M(x_1)=q_2,
\end{equation}
which proves the result. \qed

\subsection{Proof of Corollary~\ref{cor:binary-sequence}}
\label{app:proof-binary}

Because
\begin{equation}\label{eq:Vx-binary}
 V_x(x,q)=\frac{q^2}{[1+xq(1+q)]^2},
\end{equation}
the sign of $D'(x)$ is the sign of
\begin{align}
 &q_H[1+xq_L(1+q_L)]-q_L[1+xq_H(1+q_H)]\notag\\
 &\hspace{4em}=(q_H-q_L)(1-xq_Lq_H).
 \label{eq:D-factorization}
\end{align}
This proves \eqref{eq:D-sign}. The largest familiarity level induced by either
initial format is $\eta+\lambda(q_L)$. The first inequality in
\eqref{eq:binary-second-condition} therefore makes $D$ strictly increasing over all
induced second-period beliefs. The second inequality says that $q_H$ is already the
strict static choice at $x=\eta$. It follows that $q_H$ is the strict date-2 choice
after either initial format.

With $q_2=q_H$ fixed at date 2, discounted profit under a low rather than a high
initial format changes by
\begin{equation}\label{eq:binary-profit-difference}
 -M\Delta V+\Delta C+\beta M\Delta H.
\end{equation}
This expression is weakly positive exactly when $\beta\ge\beta_P$. Condition
\eqref{eq:threshold-interior} places the threshold strictly between zero and one.
Prices follow from Lemma~\ref{lem:pricing}, and both dates use cutoff $c^*$. \qed

\section{Welfare and calibration proofs}
\label{app:policy-proofs}

\subsection{Proof of Proposition~\ref{prop:standard}}

Under constrained monopoly access, a unit increase in information value generates
gross customer value $A$. Relative to the high first-period format, the low format
changes constrained welfare by
\begin{equation}\label{eq:welfare-low-minus-high}
 -A\Delta V+\Delta C+\beta A\Delta H.
\end{equation}
It is weakly positive exactly when $\beta\ge\beta_S$. Since $A>M$ and
$\Delta C>0$,
\begin{equation}\label{eq:threshold-gap-proof}
 \beta_S-\beta_P
 =\frac{\Delta C}{\Delta H}\left(\frac{1}{M}-\frac{1}{A}\right)>0.
\end{equation}
If $\Delta C=0$, the gap is zero. Comparing the private and constrained-social choices
on either side of the two thresholds gives the standard's welfare effect. \qed

For completeness, with universal access the value coefficient is $B$ rather than
$A$. Substituting $B$ for $A$ in \eqref{eq:welfare-low-minus-high} gives
\eqref{eq:beta-full-access}; $B>A>M$ gives the stated ordering whenever the
thresholds are interior.

\subsection{Proof of Proposition~\ref{prop:audit}}

When $b$ is known, subtracting $b/(1+q)$ from the score and rescaling leaves the raw
signal $s=\theta+\varepsilon$. Its reduction in posterior variance is
$q/(1+q)=\Vhat(q)$. Condition \eqref{eq:audit-high-condition} therefore makes $q_H$
optimal in the audited regime.

Relative to the specified unaudited allocation, auditing raises the two-period sum of
unit-stakes information values by $Z$, raises first-period production cost by
$\Delta C$, and costs $K$. Subscription revenue rises by $MZ$, so private adoption is
equivalent to $MZ-\Delta C-K\ge0$. Gross value to the constrained subscriber set rises
by $AZ$, so constrained-social adoption is equivalent to
$AZ-\Delta C-K\ge0$. These inequalities give \eqref{eq:KP} and \eqref{eq:KS}; their
difference is \eqref{eq:audit-wedge}. \qed

\section{Private subscriber histories and forward-looking demand}
\label{app:private-history}

The public-release timing keeps the main text transparent. We now suppose instead
that only date-1 buyers observe the first score. Let their date-2 unit-stakes value be
\begin{equation}\label{eq:private-HL}
 H=V(\eta+\lambda(q_1),q_2),
 \qquad
 L=V(\eta,q_2),
 \qquad H>L,
\end{equation}
while date-1 unit value is $v=V(\eta,q_1)$. The firm cannot condition its second price
on customer identity. Customers discount date 2 by $\delta$ and the firm by $\beta$;
assume $0\le\delta\le\beta$. The equality $\delta=\beta$ is the common-discount case.

\begin{proposition}[No penetration pricing with private learning]
\label{prop:private-history}
For any continuous stakes distribution $F$, any fixed $(v,H,L)$ with $H>L$, and
$\delta\le\beta$, total discounted revenue is bounded above by
\begin{equation}\label{eq:private-revenue-bound}
 M(v+\beta H).
\end{equation}
The bound is attained by the static date-1 cutoff $c^*$, price $p_1=vc^*$, and
date-2 price $p_2=Hc^*$. Thus only date-1 buyers renew, the date-2 cutoff among them is
$c^*$, and the provider's quality objective is the same as \eqref{eq:revenue}.
\end{proposition}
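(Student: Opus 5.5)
The plan is to parametrize an arbitrary pair of prices, derive the demand structure they induce, and then bound the firm's discounted revenue by $M(v+\beta H)$ case by case. Attainment by the stated static prices is then a direct check.

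\emph{Demand structure.} Fix $(p_1,p_2)$.
\begin{itemize}
\item At date 2 an informed customer of type $\gamma$ buys iff $\gamma H\ge p_2$, and an uninformed one buys iff $\gamma L\ge p_2$.
\item A type-$\gamma$ customer's gain from buying at date 1 is
\[
\gamma v-p_1+\delta\bigl[(\gamma H-p_2)^+-(\gamma L-p_2)^+\bigr].
\]
This is nondecreasing in $\gamma$, because $H>L$ makes the bracket nondecreasing.
\item Hence date-1 buyers form an upper set $\{\gamma\ge c_1\}$.
\item If the date-1 buyer set is nonempty, take $p_1$ as large as possible, so that the marginal type $c_1$ is indifferent:
\[
p_1=c_1v+\delta\bigl[(c_1H-p_2)^+-(c_1L-p_2)^+\bigr].
\]
\end{itemize}

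\emph{Revenue decomposition.} Write $\rho(c_1,p_2)=(c_1H-p_2)^+-(c_1L-p_2)^+\ge0$ for the marginal buyer's learning rent. Date-2 revenue is
\[
p_2\Bigl[S\bigl(\max\{c_1,p_2/H\}\bigr)+\bigl(F(c_1)-F(p_2/L)\bigr)^+\Bigr].
\]
The first-period part satisfies $p_1S(c_1)\le Mv+\delta S(c_1)\rho$, using $c_1S(c_1)\le M$. Since $\delta\le\beta$ and the rent is nonnegative, it suffices to show
\[
T:=S(c_1)\rho+p_2S\bigl(\max\{c_1,p_2/H\}\bigr)+p_2\bigl(F(c_1)-F(p_2/L)\bigr)^+\le MH.
\]
This replacement of $\delta$ by $\beta$ is the only place the assumption $\delta\le\beta$ enters.

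\emph{Case analysis on $p_2$ relative to $c_1L\le c_1H$.}
\begin{enumerate}
\item If $p_2\ge c_1H$, then $\rho=0$ and the uninformed term vanishes. So $T=p_2S(p_2/H)\le MH$.
\item If $c_1L\le p_2<c_1H$, then $\rho=c_1H-p_2$ and the uninformed term again vanishes. So $T=c_1HS(c_1)\le MH$.
\item If $p_2<c_1L$, write $p_2=cL$ with $c<c_1$. Then $\rho=c_1(H-L)$, and
\[
T=S(c_1)\bigl(c_1(H-L)+cL\bigr)+cL\bigl(S(c)-S(c_1)\bigr)=(H-L)\,c_1S(c_1)+L\,cS(c).
\]
This is at most $(H-L)M+LM=MH$.
\end{enumerate}
The same bounds cover an empty date-1 buyer set, with $S(c_1)=0$.

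\emph{Expected obstacle.} The third case, where a low second price also sells to uninformed customers, is the step I expect to need care. The point to verify there is that the capitalized rent and the sales to nonbuyers recombine exactly into a convex combination, with weights $H-L$ and $L$, of two static revenue terms each bounded by $M$. Once that identity is written down, the bound follows from the definition of $M$ alone, with no assumption on $F$ beyond continuity.

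\emph{Attainment.} Take $c_1=c^*$ and $p_2=Hc^*$, which falls in the second case.
\begin{itemize}
\item Since $p_2=Hc^*<c^*$ whenever $H<1$, informed buyers purchase iff $\gamma\ge c^*$.
\item Uninformed types would need $\gamma\ge Hc^*/L>c^*$, but all such types are informed, so no uninformed customer buys.
\item At the marginal type the rent is zero because $c^*H=p_2$, so $p_1=vc^*$.
\end{itemize}
Revenue is therefore $Mv+\beta MH$. This attains the bound, and only date-1 buyers renew, with the same cutoff $c^*$. Substituting $H=V(\eta+\lambda(q_1),q_2)$ reproduces \eqref{eq:revenue}.
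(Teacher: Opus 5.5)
Your proof is correct and follows the paper's argument essentially step for step. It uses the same marginal-buyer rent $[cH-p]_+-[cL-p]_+$, the same use of $\delta\le\beta$ to fold the capitalized rent into date-2 revenue, the same three cases in reverse order (with the low-price case recombining into $(H-L)c_1S(c_1)+L\,cS(c)\le HM$), and the same attainment at $(c^*,Hc^*)$. One small point: the remark $Hc^*<c^*$ in your attainment step is irrelevant, since informed buyers renew simply because $\gamma H\ge Hc^*$ exactly when $\gamma\ge c^*$.
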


\begin{proof}
Fix an initial cutoff $c$ and a second-period price $p$. The marginal initial buyer's
continuation advantage from becoming familiar is
\begin{equation}\label{eq:continuation-rent}
 \Delta(c,p)=[cH-p]_+-[cL-p]_+.
\end{equation}
Single crossing implies that the largest first price consistent with cutoff $c$ is
$p_1=cv+\delta\Delta(c,p)$. Current revenue is at most $vM$. Because
$\Delta(c,p)\ge0$ and $\delta\le\beta$, it remains to show that the capitalized
familiarity rent plus second-period revenue, both weighted by $\beta$, is at most
$\beta HM$.

There are three cases. If $p\le Lc$, experienced customers all renew and the combined
second-date expression before discounting is
\begin{equation}\label{eq:private-case1}
 c(H-L)S(c)+pS(p/L)le (H-L)M+LM=HM.
\end{equation}
If $Lc\le p\le Hc$, only experienced customers buy and the corresponding expression
is
\begin{equation}\label{eq:private-case2}
 (cH-p)S(c)+pS(c)=HcS(c)\le HM.
\end{equation}
If $p\ge Hc$, the marginal initial buyer earns no continuation rent and revenue is
\begin{equation}\label{eq:private-case3}
 pS(p/H)\le HM.
\end{equation}
Adding the current bound proves \eqref{eq:private-revenue-bound}.

Set $c=c^*$ and $p=Hc^*$. Then the marginal buyer has zero continuation rent,
$p_1=vc^*$, current revenue is $vM$, and renewal revenue is $HM$. The bound is
attained. \qedhere
\end{proof}

This result does not say that learning-based penetration pricing is generally
impossible. It uses a stable one-dimensional stakes type, a uniform second price, and
a continuation product whose only heterogeneity is the familiar--unfamiliar value
gap. It shows that within the present institution, private interpretive capital does
not by itself overturn the static cutoff.

\section{State feedback}
\label{app:feedback}

Suppose that after the first decision an independent public signal
\begin{equation}\label{eq:feedback-signal}
 r=\theta+u,
 \qquad u\sim N(0,\rho^{-1}),
\end{equation}
arrives, where $\rho\ge0$. Conditional on $r$, the residual variance of $\theta$ is
$1/(1+\rho)$. In the calibration signal \eqref{eq:b-signal-appendix}, the noise around
$b$ consequently has conditional variance
\begin{equation}\label{eq:feedback-noise}
 q+\frac{q^2}{1+\rho}
 =\frac{q(1+\rho+q)}{1+\rho}.
\end{equation}
Its inverse is \eqref{eq:lambda-rho}. Differentiation gives
\begin{equation}\label{eq:feedback-derivative}
 \frac{\partial\lambda(q;\rho)}{\partial q}
 =-\frac{(1+\rho)(1+\rho+2q)}{q^2(1+\rho+q)^2}<0.
\end{equation}
When $\rho=0$, the baseline increment $1/[q(1+q)]$ obtains. As
$\rho\to\infty$, the state is effectively revealed and the increment converges to
$1/q$. Feedback strengthens calibration learning but does not reverse the
accuracy--interpretability trade-off.

\section{Imperfect persistence}
\label{app:persistence}

Let the second-period calibration evolve according to
\begin{equation}\label{eq:persistent-b}
 b_2=\varrho b_1+\nu,
 \qquad
 \nu\sim N(0,\sigma_\nu^2),
\end{equation}
with $\nu$ independent of first-period variables and $\sigma_\nu^2>0$. After a first
score of evidence precision $q_1$, the posterior precision about $b_1$ is
$\eta+\lambda(q_1)$. The induced precision about $b_2$ is
\begin{equation}\label{eq:persistence-precision}
 x_2(q_1)
 =\left[
 \frac{\varrho^2}{\eta+\lambda(q_1)}+\sigma_\nu^2
 \right]^{-1}.
\end{equation}
For $\varrho\ne0$, $x_2(q_1)$ is strictly decreasing in $q_1$: lower initial evidence
teaches more about $b_1$ and therefore about $b_2$. Future information value becomes
$V(x_2(q_1),q_2)$ and remains strictly decreasing in $q_1$. If $\varrho=0$, the two
calibrations are independent, $x_2=\sigma_\nu^{-2}$, and the dynamic learning motive
vanishes. With positive innovation variance, the force converges continuously to zero
as $\varrho\to0$.

\section{Numerical verification}
\label{app:numerical}

For the parameters in \eqref{eq:numeric-parameters}, the exact calculations are
\begin{align}
V(0.1,0.5)&=0.0232558,
&V(0.1,1)&=0.0833333,\notag\\
V(0.1+\lambda(0.5),1)&=0.3706897,
&V(0.1+\lambda(1),1)&=0.2727273.\label{eq:numeric-exact}
\end{align}
The high-minus-low date-2 value gains are $0.197999$ following $q_L$ and
$0.169278$ following $q_H$, both above $\Delta C/M=0.032$. The resulting thresholds
are
\begin{equation}\label{eq:numeric-thresholds-appendix}
 \beta_P=0.28662,
 \qquad
 \beta_S=0.39550,
 \qquad
 \beta_{FA}=0.44994.
\end{equation}
At $\beta=0.34$, $\Vhat(1)=1/2$ and
\begin{equation}\label{eq:numeric-audit-appendix}
 Z=0.52071,
 \qquad
 MZ-\Delta C=0.12218,
 \qquad
 AZ-\Delta C=0.18727.
\end{equation}


\begin{thebibliography}{99}
\small

\bibitem[Bergemann and V\"alim\"aki(2006)]{bergemannvalimaki2006}
Bergemann, D., and V\"alim\"aki, J. (2006).
Dynamic pricing of new experience goods.
\emph{Journal of Political Economy}, 114(4), 713--743.

\bibitem[Bonatti(2011)]{bonatti2011}
Bonatti, A. (2011).
Menu pricing and learning.
\emph{American Economic Journal: Microeconomics}, 3(3), 124--163.

\bibitem[Campbell(2015)]{campbell2015}
Campbell, J. D. (2015).
Ownership and pricing of information: A model and application to open access.
\emph{Information Economics and Policy}, 33, 29--42.

\bibitem[Galperti and Trevino(2020)]{galpertitrevino2020}
Galperti, S., and Trevino, I. (2020).
Coordination motives and competition for attention in information markets.
\emph{Journal of Economic Theory}, 188, 105039.

\bibitem[Gentzkow and Shapiro(2006)]{gentzkowshapiro2006}
Gentzkow, M., and Shapiro, J. M. (2006).
Media bias and reputation.
\emph{Journal of Political Economy}, 114(2), 280--316.

\bibitem[Goering(1986)]{goering1986}
Goering, P. A. (1986).
Learning, quality and prices.
\emph{Information Economics and Policy}, 2(1), 23--47.

\bibitem[Jann and Schottm\"uller(2026)]{jannschottmuller2026}
Jann, O., and Schottm\"uller, C. (2026).
Manufactured ignorance: Strategic fragmentation in competitive news markets.
Working paper, June 2026.

\bibitem[Kawamura and Le Quement(2023)]{kawamuralequement2023}
Kawamura, K., and Le Quement, M. T. (2023).
News accuracy and the quest for clicks.
\emph{Journal of Public Economics}, 227, 105005.

\bibitem[Panova(2026)]{panova2026}
Panova, E. (2026).
Getting reliable news from unreliable sources.
\emph{The Journal of Industrial Economics}, online first.
\href{https://doi.org/10.1111/joie.70029}{https://doi.org/10.1111/joie.70029}.

\bibitem[Sethi and Yildiz(2016)]{sethiyildiz2016}
Sethi, R., and Yildiz, M. (2016).
Communication with unknown perspectives.
\emph{Econometrica}, 84(6), 2029--2069.

\bibitem[Shapiro(1983)]{shapiro1983}
Shapiro, C. (1983).
Optimal pricing of experience goods.
\emph{The Bell Journal of Economics}, 14(2), 497--507.

\end{thebibliography}
\end{document}